\documentclass{article}

\usepackage{PRIMEarxiv}

\usepackage[utf8]{inputenc} % allow utf-8 input
\usepackage[T1]{fontenc}    % use 8-bit T1 fonts
\usepackage{hyperref}       % hyperlinks
\usepackage{url}            % simple URL typesetting
\usepackage{booktabs}       % professional-quality tables
\usepackage{amsfonts}       % blackboard math symbols
\usepackage{nicefrac}       % compact symbols for 1/2, etc.
\usepackage{microtype}      % microtypography
\usepackage{lipsum}
\usepackage{fancyhdr}       % header
\usepackage{graphicx}       % graphics
\graphicspath{{media/}}     % organize your images and other figures under media/ folder
\usepackage{graphicx}
\usepackage{amssymb}
\usepackage{amsmath}
\usepackage{tikz}
\usepackage{amsthm}
\usepackage{braket}

\newtheorem{thm}{Theorem}

\newtheorem{definition}{Definition}
\newtheorem{exmp}{Example}

\newtheorem{remark}{Remark}
\usepackage{color}
\usepackage{dutchcal}

%Header
\pagestyle{fancy}
\thispagestyle{empty}
\rhead{ \textit{ }} 

% Update your Headers here
\fancyhead[LO]{On Quantum Context-Free Grammars}
% \fancyhead[RE]{Firstauthor and Secondauthor} % Firstauthor et al. if more than 2 - must use \documentclass[twoside]{article}

%% Title
\title{On Quantum Context-Free Grammars
%%%% Cite as
%%%% Update your official citation here when published 
%\thanks{\textit{\underline{Citation}}: 
%\textbf{Authors. Title. Pages.... DOI:000000/11111.}} 
}

\author{
  Merina Aruja \\
  Saintgits College of Engineering, Kottayam, Kerala, India \\ Amal Jyothi College of Engineering, Kanjirappally, Kerala, India\\
  APJ Abdul Kalam Technological University,Thiruvananthapuram,
  India\\
  \texttt{\{merina.aruja@saintgits.org\}} \\
  %% examples of more authors
   \And
 Lisa Mathew$^{* }$, Jayakrishna Vijayakumar \\
  Amal Jyothi College of Engineering,  Kanjirapally, Kerala,
  India\\
\texttt{\{lisamathew@amaljothi.ac.in, vjayayakrishna@amaljothi.ac.in\}} \\
  %% \AND
  %% Coauthor \\
  %% Affiliation \\
  %% Address \\
  %% \texttt{email} \\
  %% \And
  %% Coauthor \\
  %% Affiliation \\
  %% Address \\
  %% \texttt{email} \\
  %% \And
  %% Coauthor \\
  %% Affiliation \\
  %% Address \\
  %% \texttt{email} \\
}

\begin{document}
\maketitle

\begin{abstract}
Quantum computing  is a relatively new field of computing, which utilises the fundamental concepts of quantum mechanics to process data. The seminal paper of Moore et al. [2000] introduced quantum grammars wherein a set of amplitudes was attached to each production. However they did not study the final probability of the derived word. Aruja et al. [2025] considered conditions for the well-formedness of  quantum context-free grammars (QCFGs), in order to ensure that the probabilty of the derived word does not exceed one.
In this paper we propose certain necessary and sufficient conditions (also known as unitary conditions) for well-formedness of QCFGs.
\end{abstract}

% keywords can be removed
\keywords{ Quantum grammars\and Quantum context-free grammars \and Amplitude \and Unitary conditions  \and Orthogonality\and Sentential form \and Quantum automata}

\section{Introduction}

 Feynman's groundbreaking idea in 1981 \cite{Feynman} of using computers to simulate physical phenomena marked the origins of Quantum computing. The reversible unitary evolution of a quantum  process was simulated by Benioff \cite{Benioff} in 1982 using the quantum mechanical Hamiltonian model of a Turing machine. In 1985 Deutsch  \cite{Deutsch} presented a comprehensive quantum model for computation, which described a universal quantum system that could perfectly simulate any finite realizable physical system. In 1994 Shor \cite{Shor} presented polynomial time algorithms for two problems known to be in NP - integer factorisation and the discrete logarithm. 
%Quantum computing is a revolutionary concept which has been of great interest  recently.
\par %Quantum automata and grammars are theoretical models that extend classical automata and grammars using  the principles of quantum mechanics to perform complex computations much faster than their classical counterparts. 
Quantum automata and grammars have been widely studied \cite{Krishnamurty, Malyshev, Jismi, Wei&Jue} since the landmark study by Moore and Crutchfield \cite{moore2000quantum} extended ideas from classical automata to their quantum equivalents. In \cite{moore2000quantum} quantum variants of finite state automata and push-down automata were presented. Regular grammars and context-free grammars were converted into their quantum counterparts. Grammars were converted to their quantum counterparts by associating an amplitude vector with each production.

\par   The generalized real-time quantum automata and grammars discussed in \cite{moore2000quantum} were not inherently constrained to satisfy the condition that the language generated is well-formed i.e. the probability associated with the generation of a  word lies between $0$ and $1$. To overcome this problem we had formulated in \cite{Meri} certain sufficient conditions to ensure the property of well-formedness.
\par The quantum grammars  defined in \cite{moore2000quantum} and \cite{Meri} did not satisfy the condition of unitarity.
In fact, \cite{moore2000quantum} thoughtfully observes, "It is not clear what constraints a quantum grammar must satisfy to be unitary." Building on this insightful observation, we are motivated to delve deeper into the foundational aspects of quantum grammars and rigorously identify the conditions under which a quantum context-free grammar can exhibit unitarity. Hence in this paper we propose certain conditions which guarantee the unitary nature of quantum context free grammars.
% \par The generalized real-time quantum automata described in \cite{moore2000quantum} was not required to be unitary. In fact the paper says that "It is not clear what constraints a quantum grammar need to meet to be unitary". Motivated by this, we attempt to determine the conditions that make a quantum context-free grammar unitary.
 
% On the other hand, there is a possibility that the probability of derivation of certain words may exceed one. Although we had introduced in \cite{Meri} a restriction on the amplitude of each production in \cite{moore2000quantum} to ensure that probability of derivation of any word does not exceed one, the grammar was not unitary.

\section{Preliminaries}
\subsection{Quantum Grammars}
As per Moore and Crutchfield \cite{moore2000quantum} a quantum grammar is defined as follows
\begin{definition}\cite{moore2000quantum}
 A quantum grammar $G = (N,T,I,P)$ consists of two alphabets $ N$ and $T$ (the set of nonterminals and the set of  terminals respectively), an initial variable $I \in N$  and a finite set $P$ of productions $\alpha \rightarrow \beta$ where
$\alpha \in N^*$ and $\beta \in {(N\cup T)^*}$.
Each production in $P$ has a set of complex amplitudes $c_k(\alpha \rightarrow \beta)$, $1\leq k \leq n$ where $n$ is the dimensionality of the quantum grammar.
\end{definition}
\begin{remark}
    This definition associates a vector $\textbf{c}(A, \beta, G)$  (or $\textbf{c}(p,G)$) of dimension $n$  to the production  $p:A \rightarrow \beta$   . 
    \end{remark}

\par Given strings $w_1$, $w_2$ and a production rule $p: A \rightarrow \gamma $, the vector $\textbf{c}(p)$ was associated with the derivation $w_1Aw_2 \implies w_1\gamma w_2$.  Consider the derivation 
\[\alpha = \alpha_0 \overset{p_1}{\implies} \alpha_1 \overset{p_2}{\implies} \alpha_2 \cdots \overset{p_m}{\implies}\alpha_m=\beta\] with amplitude vector $<c_{i_1}, c_{i_2}, \cdots c_{i_n}>$ corresponding to the $i^{th}$  step, $\alpha_{i-1} \implies \alpha_i$  of the derivation.
Then, the $k^{th}$ amplitude $c_k$ of the derivation   $\alpha \overset{p_1p_2\cdots p_m}\implies \beta$ was defined as the product $\prod_{i=1}^n c_{i_k} $ of the $k^{th}$ amplitudes for each production in the chain and $c_k(\alpha \overset{*}\implies \beta)$ as the sum of the $c_k^,s$ of all possible derivations
of $\beta$ from $\alpha$. The $k^{th}$ amplitude of a word $w \in T^*$ is 
 $c_k(w) = c_k(I \overset{*}\implies w)$  and the probability $f(w)$ associated with the word $w$ is given by
 %is the norm of its vector of amplitudes, summed over the dimension of the grammar
 \[f(w)=\sum^n_{k=1}|c_k(w)|^2\] Then $G$ generates the
quantum language $f$.\\
 The following notations were introduced in \cite{Meri}:
\begin{enumerate}
    
    \item The vector of amplitudes of the derivation  given by \[\alpha_0\implies\alpha_1\implies\cdots\implies\alpha_m\] is denoted by  $\textbf{c}(\alpha_0,\alpha_m) $.
    \item The amplitude of the word $w$ is denoted $\textbf{c}(w)$.
     \item $R(A)=\{\beta \in  (N\cup T)^*|A\rightarrow\beta\}$ for $A \in N$ .
      
\end{enumerate}
\noindent Moreover the concept of amplitude was extended to words over $(N\cup T)^*$  . 
\begin{definition}\cite{moore2000quantum}
A quantum grammar is context-free (QCFG) if the productions are of the form $\alpha \rightarrow \beta$, $\alpha \in N , \beta \in (N \cup T)^*$.
\end{definition} 
  We now recall some notions from functional analysis and quantum mechanics for completeness of our discussion.
 \begin{definition}
    The inner product of two vectors $u= <c_1, c_2, \cdots c_n>$ and $v=<d_1, d_2, \cdots d_n>$ is defined as  \[ \langle u, v \rangle = \sum_{i=1}^{n} \bar{c_i}d_i\].
\end{definition}
\begin{definition}
    Two vectors $<c_1, c_2, \cdots c_n>$ and $<d_1, d_2, \cdots d_n>$  are said to be orthogonal if their inner product is zero i.e\[\sum_{i=1}^{n} \bar{c_i}d_i=0\].
\end{definition}
\begin{definition}\label{Uni}
A unitary matrix is a square matrix $U$ of complex numbers that satisfies the condition
\[ U^{-1} = U^* \]
where $U^*$ denotes the conjugate transpose of $U$. Alternatively, a matrix $U$ is unitary if
\[ UU^* = U^*U = I \]
where $I$ is the identity matrix of the same order as $U$.
\end{definition}

% In \cite{Meri} we proposed the following theorem to ensure that the probability of accepting a word does not exceed one.

% \begin{thm}\label{cond}
%     Let  $G=(N,T,I,P)$ be a quantum context free grammar such that \begin{enumerate}
%         \item \label{sum}$\sum_{\beta \in R(A) }|\textbf{c}(A, \beta)|^2= 1$, 
%     for each $A\in N$. 
%     \item \label{ortho}Given $\beta, \gamma\in R(A), \textbf{c}(A\rightarrow \beta)$ and $\textbf{c}(A\rightarrow \gamma)$ are orthogonal.
%     \end{enumerate}
%  Then the function $f(w)$ defines a well-formed quantum language.
    
%         \end{thm}

%  \begin{remark}
%    It should be noted that the condition in  Theorem \ref{cond} is only  sufficient but not necessary for a quantum language of a quantum context-free grammar to be well-formed.   
%       \end{remark}  
In quantum mechanics entanglement occurs when two or more quantum systems become correlated in such a way that the state of one system cannot be described independently of the state of the other(s). Entangled states exhibit correlations that are stronger than those allowed by classical physics.\\
Thus states can be classified based on their representation in Hilbert spaces. A quantum state is considered separable if it can be expressed as a product of states of individual systems. For example, for two subsystems \( A \) and \( B \), a separable state can be written as:

\[
\ket{\psi}_{AB} = \ket{\psi_A} \otimes \ket{\psi_B}
\]

where \( \ket{\psi_A} \) and \( \ket{\psi_B} \) are states of subsystems \( A \) and \( B \), respectively.
In contrast, an entangled state cannot be factored into a product of individual states. An example of an entangled state for two qubits is:
\[
\ket{\psi}_{AB} = \frac{1}{\sqrt{2}} \left( \ket{00} + \ket{11} \right)
\]
This state exhibits correlations between measurements on subsystems A and B that cannot be explained by any local hidden variable theory. The overall state space of a composite system formed from two subsystems A and B is given by the tensor product of their respective Hilbert spaces:
The combined Hilbert space for two subsystems \( A \) and \( B \) can be expressed as:

\[
\mathcal{H}_{AB} = \mathcal{H}_A \otimes \mathcal{H}_B
\]

If both \( \mathcal{H}_A \) and \( \mathcal{H}_B \) are separable Hilbert spaces, then \( \mathcal{H}_{AB} \)
  is also separable. This structure allows for the mathematical treatment of entangled states within the framework of separable spaces.
\par The separability criterion provides a clear mathematical distinction between separable and entangled states. If the density matrix cannot be expressed as a convex combination of product states, it indicates entanglement. The separability of Hilbert spaces impacts how measurements on entangled systems are interpreted. When measuring one subsystem in an entangled pair, the outcome instantaneously influences the state of the other subsystem, regardless of distance. This phenomenon is described mathematically by the non-local correlations predicted by quantum mechanics.

\section{Unitary Criteria for Quantum Context-free Grammars}
  In this section we try to establish a unitary condition on the quantum context-free grammars.

\begin{definition}
     For each $a \in T$, let $U_a$ be the evolution matrix whose rows are indexed by sentential forms containing at least one nonterminal and whose columns correspond to the ordered pair consisting of sentential forms containing at least one occurrence of $a$ together with the rule used to derive it. The $ij^{th}$  entry of this matrix is the amplitude of the production $A \rightarrow \beta$  where $xAy$  indexes the $i^{th}$ row and $(x\beta y, A \rightarrow \beta)  $ indexes the $j^{th}$ column. If there is no such derivation then the $ij^{th}$ entry is $0$.
     \end{definition}
     \begin{definition}
    A quantum context-free grammar is unitary if and only if each of its  evolution matrices $U_a$ as defined above is unitary. 
\end{definition}
\par Let $\mathcal{S} $ denote the set of all sentential forms generated by a quantum context-free grammar. Unlike classical context-free grammars, where each step produces a specific sentential form, quantum grammars operate with superpositions of sentential forms. %Consider the graph $H(\mathcal{S})$ whose vertex set is $\mathcal{S}$ and whose edges $(u,v)$ corresponds to the production which replaces $u $ by $v$. 
%Clearly $H(\mathcal{S})$ is a directed acyclic graph (DAG) and each directed path from $I$ to $S$ corresponds to a particular derivation of $S$. 
%The sum of the probabilities of sentential forms at any level must always be one regardless of the starting configuration. This can be  ensured by multiplying the amplitude vectors of all possible sentential forms by a suitable unitary matrix. 
Each sentential form corresponds to a basis  vector in the Hilbert space $l_2(\mathcal{S})$. Since a single nonterminal can be replaced by a linear combination of these sentential forms, 
a global production of the quantum grammar in the space $l_2(\mathcal{S})$, which corresponds to a superposition  of the productions with the same nonterminal on the left, has the form $\ket{\psi}= \sum\limits_{s \in \mathcal{S}}\alpha_s\ket{s}$ with 
we have $\sum\limits_{s \in \mathcal{S}}|\alpha_s|^2 = 1$ where $\alpha_s \in \mathbb{C}$ is the amplitude of the sentential form. %\[\sum_{A \in N }|\textbf{c}(A, \beta)|\]. 
 The next step of the further derivation can be obtained by the application of the linear operator $U_a$  \\
 \\
\textbf {Well-formedness conditions for quantum context-free grammars} \\\\
Since the matrix $U_a$ is infinite  we cannot directly apply definition \ref{Uni} to verify that it is unitary. Hence we propose certain criteria to ensure unitarity. The following well-formedness conditions are derived by observing that, from any given sentential form of the quantum grammar, it is feasible to transition to only a finite number of other sentential forms within a single step. Conversely, the given sentential form can also be reached from only a finite number of distinct sentential forms in one step. 
\begin{enumerate}
    %\item \label{init}Initial vectors norm condition:\%[\sum_{\beta \in R(I) }|\textbf{c}(I, \beta)|^2= 1\]. 
    \item \label{C1} Row vector norm condition:
    \[\sum_{\beta \in R(A) }|\textbf{c}(A, \beta)|^2=1\]
    
    \item  Separability Conditions:
    \begin{enumerate}
        \item \label{C2}$\sum\limits_{x\beta_1y~=~u\beta_2v ~  \in~ (N \cup T)^*  } \bar{\mathbf{c}}(A_1,\beta_1)\cdot \mathbf{c}(A_2,\beta_2)=0 $\\
     \hspace*{8cm}where $\beta_i \in R(A_i) $ 

    \item \label{C3} $\sum\limits_{|\beta_1| \neq |\beta_2|}\bar{\mathbf{c}}(A, 
    \beta_1)\cdot \mathbf{c}(A,\beta_2)=0$ \\ 
   \hspace*{8cm}where $\beta_i \in R(A)$  
    \end{enumerate}
\end{enumerate}

\begin{thm}
    The quantum context-free grammar is unitary if and only if the well-formedness conditions are satisfied.
\end{thm}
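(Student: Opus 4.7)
The plan is to prove the biconditional by translating the matrix equation $U_a U_a^* = U_a^* U_a = I$ into orthonormality statements about the rows and columns of $U_a$, and then matching those statements against the three well-formedness conditions. The setup to fix first is that each row of $U_a$ (indexed by a sentential form $xAy$) has nonzero entries $\textbf{c}(A,\beta)$ exactly at the columns indexed by pairs $(x\beta y, A \to \beta)$ with $\beta \in R(A)$, while each column has nonzero entries determined by the factorizations through which its associated sentential form can arise from its associated rule. Since from any sentential form only finitely many one-step forward and backward transitions exist, both rows and columns of $U_a$ have finite support, so every relevant inner product is a finite sum and unitarity can be checked entry-by-entry even though $U_a$ itself is infinite-dimensional.

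For the forward direction, I would assume $U_a$ is unitary and extract the three conditions one at a time. The diagonal entries of $U_a U_a^*$ are the squared norms of the rows, and equating each to $1$ yields $\sum_{\beta \in R(A)} |\textbf{c}(A,\beta)|^2 = 1$, which is condition \ref{C1}. The off-diagonal entries between two distinct rows indexed by $xA_1 y$ and $uA_2 v$ unfold to $\sum_{x\beta_1 y = u\beta_2 v} \bar{\textbf{c}}(A_1,\beta_1)\cdot \textbf{c}(A_2,\beta_2)$, and their vanishing is exactly condition \ref{C2}. Condition \ref{C3} then comes from analyzing $U_a^* U_a = I$ on column pairs whose associated rules share a head nonterminal $A$ but have right-hand sides of differing lengths, a case in which the length mismatch forces the inner product to be a separate sum that must vanish.

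For the reverse direction, the computations run in parallel. Assuming the three conditions, I would compute each entry of $U_a U_a^*$ and $U_a^* U_a$ and show it agrees with the identity matrix. The row-norm condition supplies both diagonals by a symmetric argument exploiting the finite-support observation. Condition \ref{C2} kills off-diagonal entries indexed by distinct sentential forms that share a common produced form, while condition \ref{C3} kills those tied to rules with the same head but right-hand sides of mismatched lengths. Combining the two separability conditions exhausts all off-diagonal cases and gives the desired unitarity.

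The hard part will be the bookkeeping needed to account for every off-diagonal entry without double-counting or omission. In particular, a single sentential form can be factored as $x\beta y$ in several ways when it contains repeated occurrences of a nonterminal or of the terminal $a$, so one off-diagonal entry may receive contributions from multiple factorizations simultaneously. I would need to partition the pairs of distinct rows (and likewise distinct columns) into those governed by condition \ref{C2} and those governed by condition \ref{C3}, and argue that together these partitions cover all cases while remaining disjoint — establishing this partition carefully, and confirming that row-orthogonality and column-orthogonality are both secured (neither follows from the other in the infinite-dimensional setting), is where the main work of the proof lies.
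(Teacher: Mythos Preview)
Your overall strategy matches the paper's: reduce unitarity of $U_a$ to orthonormality of its rows and columns in $l_2$, use the finite-support observation so that every inner product is a finite sum, and then identify the resulting identities with the three well-formedness conditions. The paper's own argument is in fact briefer than your plan---it treats only the ``if'' direction explicitly and simply asserts that condition~\ref{C1} normalises the rows while conditions~\ref{C2} and~\ref{C3} make the columns orthogonal.

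The one place your plan diverges, and where it would run into trouble, is your attribution of condition~\ref{C2}. You read it as the vanishing of the inner product between two distinct \emph{rows} indexed by $xA_1y$ and $uA_2v$; the paper reads both~\ref{C2} and~\ref{C3} as \emph{column} orthogonality---\ref{C2} for two columns sharing the same target sentential form $x\beta_1y=u\beta_2v$ but carrying different rules, and~\ref{C3} for columns whose right-hand sides differ in length. Your row reading cannot work as written: since each column carries its rule $A\to\beta$ as part of its index, two distinct rows share a nonzero column only through the \emph{same} rule, so both entries at that shared column equal $\textbf{c}(A,\beta)$ and the off-diagonal entry of $U_aU_a^{*}$ becomes a sum of nonnegative terms $|\textbf{c}(A,\beta)|^2$. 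No amplitude identity of the shape of~\ref{C2} can force such a sum to zero; its vanishing is a structural (combinatorial) matter, not an amplitude condition. Under the paper's column reading, by contrast, two columns $(t,A_1\to\beta_1)$ and $(t,A_2\to\beta_2)$ share a nonzero row exactly when that row admits factorisations $xA_1y=uA_2v$ with $x\beta_1y=u\beta_2v=t$, and the resulting inner product is precisely the sum appearing in~\ref{C2}. Realigning your partition so that both separability conditions govern column pairs (with condition~\ref{C1} supplying the normalisation) brings your plan back in step with the paper's argument.
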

\begin{proof}
To ensure unitarity, we consider two distinct columns of the infinite evolution matrix and compute their inner product within the Hilbert space. Unitarity demands that the columns are orthogonal, thereby preserving the orthogonality condition inherent to unitary operators. Despite the columns being infinite, only a finite number of non-zero summands contribute to the sum representing their inner product. Consequently, the resulting sum is finite and must equal zero to satisfy the orthogonality condition. From the point of quantum mechanics this means that two derivations of a single sentential form  which are entangled must be identical. In other words separability implies that if a sentential form can  be reached by two different paths then the last production in the two derivations should be orthogonal. Moreover two sentential forms reachable form the same sentential form using productions of varying lengths  cannot be entangled. %This establishes one of the well-formedness conditions. By varying the selection of column pairs, we derive multiple well-formedness conditions, each contributing to the structural consistency of the evolution operator.
\par Given $G = (N, T, I, P )$ a quantum context-free grammar satisfying the above conditions, Condition \ref{C1} ensures that the rows of the QCFG  evolution matrix  are normalised and Condition \ref{C2} ensures that  every pair of columns indexed by sentential forms $x \beta y$ are orthogonal, but with different rules while Condition \ref{C3} ensures that column vectors of sentential forms with different lengths are orthogonal.\\
      Clearly these three conditions ensure that $U_a$ is unitary.
     \end{proof}
\begin{exmp}
   Let $G=(\{I,A,B\},\{a,b\},I,P)$ be a quantum context-free \cite{golovkins2000quantum} of dimension 4  where \[P= \{I \rightarrow aIB, I\rightarrow aB, I \rightarrow aABB, A\rightarrow a, B\rightarrow b\}\] 
   \[\text{with  amplitudes }~~~ \Big< \frac{1}{2\sqrt{3}}, \frac{1}{2\sqrt{3}}, \frac{1}{2\sqrt{3}}, \frac{1}{2\sqrt{3}}\Big > , ~ \Big< \frac{1}{2\sqrt{3}}, \frac{-1}{2\sqrt{3}}, \frac{1}{2\sqrt{3}}, \frac{-1}{2\sqrt{3}}\Big >,\]\[ \Big< \frac{1}{2\sqrt{3}}, \frac{1}{2\sqrt{3}}, \frac{-1}{2\sqrt{3}}, \frac{-1}{2\sqrt{3}}\Big> ,~
     \Big< \frac{1}{2}, \frac{-1}{2}, \frac{-1}{2}, \frac{1}{2}\Big > \text{ and } \Big< \frac{i}{2}, \frac{-i}{2}, \frac{-i}{2}, \frac{i}{2}\Big > ~\text{respectively}. \]\\
% Consider the derivation
% \begin{align*}
% I &\implies aIB ~~with~ amplitude & \Big< \frac{1}{4}, \frac{1}{4}, \frac{1}{4}, \frac{1}{4}\Big >\\
% &\implies aaBB ~~with~ amplitude & \Big< \frac{1}{16}, \frac{-1}{16}, \frac{1}{16}, \frac{-1}{16}\Big >\\
% &\implies aabB ~~with~ amplitude & \Big< \frac{i}{32}, \frac{i}{32}, \frac{-i}{32}, \frac{-i}{32}\Big >\\
% &\implies aabb ~~with ~amplitude & \Big< \frac{-1}{64}, \frac{1}{64}, \frac{-1}{64}, \frac{1}{64}\Big > 
% \end{align*}
% Also we have the direct derivation\\
% \begin{align*}
% I &\implies aABB ~~with~ amplitude & \Big< \frac{1}{4}, \frac{1}{4}, \frac{-1}{4}, \frac{-1}{4}\Big >\\
% &\implies aaBB ~~with~ amplitude & \Big< \frac{1}{8}, \frac{-1}{8}, \frac{1}{8}, \frac{-1}{8}\Big >\\
% &\implies aabB ~~with~ amplitude & \Big< \frac{i}{16}, \frac{i}{16}, \frac{-i}{16}, \frac{-i}{16}\Big >\\
% &\implies aabb ~~with ~amplitude & \Big< \frac{-1}{32}, \frac{1}{32}, \frac{-1}{32}, \frac{1}{32}\Big > 
% \end{align*}
% Hence the amplitude of the word $aabb$ generated by the grammar $G$ is \[\Big< \frac{-3}{64}, \frac{3}{64}, \frac{-3}{64}, \frac{3}{64}\Big > \]\\
% The probability associated with $w=aabb$ is $f(aabb)=\frac{9}{1024}$.
Consider the derivation
\begin{align*}
I &\implies aIB ~~with~ amplitude &  \Big< \frac{1}{2\sqrt{3}}, \frac{1}{2\sqrt{3}}, \frac{1}{2\sqrt{3}}, \frac{1}{2\sqrt{3}}\Big >\\
&\implies aaIBB ~~with~ amplitude & \Big< \frac{1}{(2\sqrt{3})^2}, \frac{1}{(2\sqrt{3})^2}, \frac{1}{(2\sqrt{3})^2}, \frac{1}{(2\sqrt{3})^2}\Big >\\
&\vdots &\vdots\\
&\implies a^{n-1}IB^{n-1} ~~with ~amplitude & \Big< \frac{1}{(2\sqrt{3})^{n-1}}, \frac{1}{(2\sqrt{3})^{n-1}}, \frac{1}{(2\sqrt{3})^{n-1}}, \frac{1}{(2\sqrt{3})^{n-1}}\Big > \\
&\implies a^{n}B^{n} ~~with ~amplitude &\Big< \frac{1}{(2\sqrt{3})^n}, -\frac{1}{(2\sqrt{3})^n}, \frac{1}{(2\sqrt{3})^n}, -\frac{1}{(2\sqrt{3})^n}\Big >\\
&\implies a^{n}bB^{n-1} ~~with ~amplitude &\Big< \frac{i}{2}\cdot\frac{1}{(2\sqrt{3})^n}, -\frac{-i}{2}\cdot\frac{1}{(2\sqrt{3})^n}, \frac{-i}{2}\cdot\frac{1}{(2\sqrt{3})^n}, -\frac{i}{2}\cdot\frac{1}{(2\sqrt{3})^n}\Big >\\
&\vdots&\vdots\\
\end{align*}
\[\implies a^{n}b^{n} ~with ~amplitude~ \Big< \Big(\frac{i}{2}\Big)^n\frac{1}{(2\sqrt{3})^n}, -\Big(\frac{-i}{2}\Big)^n\frac{1}{(2\sqrt{3})^n},\Big(\frac{-i}{2}\Big)^n\frac{1}{(2\sqrt{3})^n},-\Big(\frac{i}{2}\Big)^n\frac{1}{(2\sqrt{3})^n}\Big >\]

Also we have another derivation
\begin{align*}
I &\implies aIB ~~with~ amplitude &\Big< \frac{1}{2\sqrt{3}}, \frac{1}{2\sqrt{3}}, \frac{1}{2\sqrt{3}}, \frac{1}{2\sqrt{3}}\Big >\\
&\implies aaIBB ~~with~ amplitude & \Big< \frac{1}{(2\sqrt{3})^2}, \frac{1}{(2\sqrt{3})^2}, \frac{1}{(2\sqrt{3})^2}, \frac{1}{(2\sqrt{3})^2}\Big >\\
&\vdots\\
&\implies a^{n-2}IB^{n-2} ~~with ~amplitude & \Big< \frac{1}{(2\sqrt{3})^{n-2}}, \frac{1}{(2\sqrt{3})^{n-2}}, \frac{1}{(2\sqrt{3})^{n-2}}, \frac{1}{(2\sqrt{3})^{n-2}}\Big > \\
&\implies a^{n-1}AB^{n} ~~with ~amplitude & \Big< \frac{1}{(2\sqrt{3})^{n-1}}, \frac{1}{(2\sqrt{3})^{n-1}}, -\frac{1}{(2\sqrt{3})^{n-1}}, -\frac{1}{(2\sqrt{3})^{n-1}}\Big > \\
&\implies a^{n}B^{n} ~~with ~amplitude &\Big<\frac{1}{2}\cdot \frac{1}{(2\sqrt{3})^{n-1}}, -\frac{1}{2}\cdot \frac{1}{(2\sqrt{3})^{n-1}}, \frac{1}{2}\cdot \frac{1}{(2\sqrt{3})^{n-1}},\\ &&-\frac{1}{2}\cdot \frac{1}{(2\sqrt{3})^{n-1}}\Big >\\
\end{align*}
\begin{align*}
&\implies a^{n}bB^{n-1} ~~with ~amplitude &\Big< \frac{i}{2}\cdot\frac{1}{2}\cdot\frac{1}{(2\sqrt{3})^{n-1}}, \frac{-i}{2}\cdot\frac{1}{2}\cdot\frac{1}{(2\sqrt{3})^{n-1}}, \frac{-i}{2}\cdot\frac{1}{2}\cdot\frac{1}{(2\sqrt{3})^{n-1}},\\ && -\frac{i}{2}\cdot\frac{1}{2}\cdot\frac{1}{(2\sqrt{3})^{n-1}}\Big >\\
 &\vdots\\
 \end{align*}
 \begin{align*}
&\implies a^{n}b^{n} ~~with ~amplitude &\hspace{2.5 cm}\Big< \Big(\frac{i}{2}\Big)^n\frac{1}{2}\cdot\frac{1}{(2\sqrt{3})^{n-1}}, -\Big(\frac{-i}{2}\Big)^n\frac{1}{2}\cdot\frac{1}{(2\sqrt{3})^{n-1}},\\ &&\Big(\frac{-i}{2}\Big)^n\frac{1}{2}\cdot\frac{1}{(2\sqrt{3})^{n-1}}, -\Big(\frac{i}{2}\Big)^n\frac{1}{2}\cdot\frac{1}{(2\sqrt{3})^{n-1}}\Big >\\
\end{align*}
Hence the amplitude of the word $a^nb^n $ generated by the quantum grammar $G$ is \[ \Big<\Big(\frac{i}{2}\Big)^n\frac{1+\sqrt{3}}{(2\sqrt{3})^n},-\Big(\frac{-i}{2}\Big)^n\frac{1+\sqrt{3}}{(2\sqrt{3})^n},\Big(\frac{-i}{2}\Big)^n\frac{1+\sqrt{3}}{(2\sqrt{3})^n}\Big)^n,-\Big(\frac{i}{2}\Big)^n\frac{1+\sqrt{3}}{(2\sqrt{3})^n}\Big >\]
and the probability associated with it is \[f(a^{n}b^{n})=\frac{8(2+\sqrt{3})}{48^n} <1, \text{ for } n>1\] 
For $n=1$ we have only one possible derivation 
\begin{align*}
 I&\implies aB\hspace{2 cm}\Big<\frac{1}{2\sqrt{3}},\frac{-1}{2\sqrt{3}}, \frac{1}{2\sqrt{3}} , \frac{-1}{2\sqrt{3}}\Big>  \\
 &\implies ab \hspace{2 cm}\Big<\frac{i}{4\sqrt{3}},\frac{i}{4\sqrt{3}}, \frac{-i}{4\sqrt{3}} , \frac{-i}{4\sqrt{3}}\Big> 
\end{align*}

which produces a word $ab$ with probability
\[f(ab)=\frac{1}{12}\]
Hence the language is well-formed.

\end{exmp}

\section{Conclusion}
In this paper we have attempted to address the open problem (7) posed by Moore and Crutchfield in \cite{moore2000quantum} and that posed by \cite{qiu2002quantum}. It may be observed that the condition discussed here corresponds to  the quantum pushdown automata constructed by Golovkins in \cite{golovkins2000quantum}. 
%\section*{Acknowledgments}
%This was was supported in part by......

%Bibliography
\bibliographystyle{unsrt}  
\bibliography{references}

\end{document}